\newcounter{MYtempeqncnt}
\newtheorem{theorem}{Theorem}
\newtheorem{proposition}[theorem]{Proposition}
\begin{document}
\title{On the Impacts of Phase Shifting Design and Eavesdropping Uncertainty on Secrecy Metrics of RIS-aided Systems}
\author{\IEEEauthorblockN{Long Kong{$^\ddagger$},~Steven Kisseleff{$^\dagger$},~Symeon Chatzinotas{$^\dagger$},~Björn Ottersten{$^\dagger$}, and Melike Erol-Kantarci{$^\ddagger$}}\\
\IEEEauthorblockA{$^\ddagger$ School of Electrical Engineering and Computer Science, University of Ottawa, Ontario, Canada\\
$^{\dagger}$The Interdisciplinary Centre for Security Reliability and Trust (SnT), University of Luxembourg, Luxembourg  \\
Emails: lkong2@uottawa.ca, \{steven.kisseleff, symeon.chatzinotas, bjorn.ottersten\}@uni.lu, melike.erolkantarci@uottawa.ca}}

\maketitle
\begin{abstract}

This paper investigates the secrecy outage probability (SOP), the lower bound of SOP, and the probability of non-zero secrecy capacity (PNZ) of reconfigurable intelligent surface (RIS)-assisted systems from an information-theoretic perspective. In particular, we consider the impacts of eavesdroppers' location uncertainty and the phase adjustment uncertainty, namely imperfect coherent phase shifting and discrete phase shifting on RIS. 
More specifically, analytical and simulation results are presented to show that (i) the SOP gain due to the increase of the RIS reflecting elements number gradually decreases; and (ii) both phase shifting designs demonstrate the same PNZ secrecy performance, in other words, the random discrete phase shifting outperforms the imperfect coherent phase shifting design with reduced complexity. 
\end{abstract}
\begin{IEEEkeywords}
Reconfigurable intelligent surfaces (RISs), physical layer security (PLS), eavesdropping location uncertainty, random discrete phase shifting, imperfect coherent phase shifting.
\end{IEEEkeywords}
\IEEEpeerreviewmaketitle
\section{Introduction}
Reconfigurable intelligent surface (RISs, a.k.a. large intelligent metasurfaces and passive holographic multiple-input and multiple-output (MIMO) surfaces) is envisioned as an appealing candidate to boost the energy and spectral efficiency for the fifth-generation (5G) of communication systems and beyond \cite{9140329,9424177}. As stated in \cite{9140329,9424177}, RIS has the potential to create smart radio environments and can be easily integrated with many existing communication techniques, e.g., multiple-input multiple-output (MIMO) \cite{9201173}, device-to-device (D2D) \cite{9305710}, and non-orthogonal multiple access (NOMA) \cite{9385957}, etc. Different from the conventional relaying technique \cite{9095301}, RIS can be properly configured to alter its incident signals' reflection directivity to either enhance signals' quality or to create destructive interference towards unauthorized users. The latter potential application is technically feasible to enhance security from the information theoretic perspective, namely physical layer security (PLS) \cite{9134962,9201173,9305710,9328149,9385957,9439833,9453160,9350290}. 

Considering the classic Wyner's wiretap model, numerous research efforts were put into investigating the secrecy performance of the RIS-aided wireless communication systems. For instance, Ai \textit{et al}. in \cite{9453160} derived the secrecy outage probability (SOP) for the RIS-aided vehicular networks. Trigui \textit{et al}. in \cite{9350290} considered the colluding and non-colluding passive eavesdropping scenarios of the RIS-aided communications using quantized phase shifting design, and derived the SOP performance metric. Given the spatial distributions of multiple users, Zhang \textit{et al}. in \cite{9439833} exploited the stochastic geometry to model the randomness of users' location and also derived the SOP and the probability of non-zero secrecy capacity. Inspired by an interesting insight revealed by Karas in \cite{7543509} suggesting that both the influences of pathloss and eavesdropper's location uncertainty should be taken into account in the deployment of wireless systems with secrecy constraint, in this paper we henceforth investigate the impacts of RIS phase shifting design and eavesdropper's location uncertainty (due to its passive nature) on the secrecy performances. 

To this end, the contributions of this paper are three-fold:
\begin{enumerate}
\item Considering the impacts of imperfect coherent phase shifting and random discrete phase shifting designs and using the stochastic geometry to model eavesdropper's location uncertainty, the statistics of legitimate and illegitimate users are derived, respectively. 
\item Exploiting the obtained statistics, i.e., the probability density function (PDF) and cumulative distribution function (CDF), secrecy performance metrics, including SOP, lower bound of SOP, and probability of non-zero secrecy capacity (PNZ), are derived with closed-form expressions.
\item Confirming the accuracy of analytical results with Monte-carlo (MC) simulations. Numerical results display that (i) merely increasing the number of reflecting elements will not enhance the secrecy outage probability significantly; and (ii) the random discrete phase shifting design outperforms the imperfect phase shifting design with reduced system overhead and without presenting secrecy performance loss. 
\end{enumerate}
\textit{Mathematical Functions and Notations}: $[x]^+ = \max(0,x)$, $j \triangleq \sqrt{-1}$, $\Gamma(\cdot)$, $\mathbf{\gamma}(a,b)$, and $\mathbf{\Gamma}(a,b)$ are the complete, lower incomplete, and upper incomplete Gamma functions, $[x]^+=\max(x,0)$; $H_{p,q}^{m,n}[\cdot]$ is the univariate Fox's $H$-function \cite[Eq. (1.2)]{mathai2009h}; $\mathbf{K}_n(\cdot)$ is the modified Bessel function of the second kind \cite[Eq. (5.52)]{gradshteyn2014table};
$G_{p,q}^{m,n}[\cdot]$ is the univariate Meijer's $G$-function \cite[Eq. (9.301)]{gradshteyn2014table}.
\section{System model and Phase shifting design}
\subsection{System Model}
Consider a typical RIS-aided wiretap model, where a transmitter (Alice) intends to send confidential messages to a legitimate receiver (Bob) with the assistance of an $L$-element RIS in the presence of a passive eavesdropper (Eve). It is assumed that (i) Alice, Bob, and Eve are equipped with single antennas; (ii) all channel gains are complex Gaussian distributed with zero means and unit variances; (iii) Eve is randomly located within a ring centred at the RIS’ position according to homogeneous Poisson point processes (HPPP), where the ring outer radius $R_2$ indicates its coverage region, and the inner radius $R_1$ depicts the minimum distance between an RIS and Eve required to achieve a predetermined SOP, for the design and deployment of a system with specific SOP requirements; and (iv) the distance $d_{AR}$ between Alice and RIS is constant.

The instantaneous received signal-to-noise ratio (SNR) for Bob and Eve, $\gamma_i, i\in \{B,E \}$ are respectively given by \cite{Kong2022WCNC}
\begin{align}
\gamma_i = \frac{P}{ \mathcal{N}_0 d_{AR}^\delta d_{R,i}^\delta} \left \vert \underbrace{\sum \limits_{l=1}^L \mathfrak{h}_{l,i} \mathfrak{g}_{l,i} \exp(j\epsilon_{l,i})}_{\epsilon_L} \right\vert^2,
\end{align}
where $P$ is the transmit power, $h_{l,i}=\mathfrak{h}_{l,i} \exp(-j\phi_{l,i}) $ and $g_{l,i} =\mathfrak{g}_{l,i} \exp(-j\psi_{l,i}) $ are independent and identically distributed (i.i.d.) Rayleigh RVs, which correspond to the channel coefficients from Alice to the $l$-th reflector element and the $l$-th reflector element to Bob or Eve, i.e., $i \in \{B,E\}$, respectively. $\mathfrak{h}_{l,i}$, $\mathfrak{g}_{l,i}$, $\phi_{l,i}$, and $\psi_{l,i}$ denote the amplitudes and phases of the corresponding fading channel gains. $u_{l,i} = w_{l,i}(\theta_{l,i})\exp(j\theta_{l,i})$ is the reflection coefficient produced by the $l$-th element of the RIS, herein $w_{l,i}(\theta_{l,i})=1$ for the phase shifts, $l=1,\cdots,L$, and $\epsilon_{l,i} = \theta_{l,i} -\phi_{l,i}-\psi_{l,i} $. $\delta$ is the path loss exponent. $\mathcal{N}_0$ is the variance of the the additive white Gaussian noise (AWGN). For the simplicity of notations, let $\bar{\gamma}_i=\frac{P}{\mathcal{N}_0d_{AR}^\delta}$ is the average SNR.
\subsection{Imperfect Coherent Phase Shifting}
Assume the phases of $h_l$ and $g_l$ can be acquired from Alice, the imperfect coherent phase shifting design means that the phase shifting of the RIS is imperfectly matched with the RIS fading gains, which is denoted by $\epsilon$ following the uniform distribution, i.e., $\epsilon_{l,i} \sim U(-\pi,\pi)$. Accordingly, the CDF and PDF of $\gamma_B$ are given by \cite[Eq. (43)]{9138463}\cite[Eq. (5)]{Kong2022WCNC}
\begin{align} \label{CDF_gammaB}
F_B^C(\gamma)& = 1- \frac{2^{1-L}}{\Gamma(L) }\left( \sqrt{\frac{\gamma}{\bar{\gamma}_B}}\right)^L \mathbf{K}_L\left(  \sqrt{\frac{\gamma}{\bar{\gamma}_B}} \right) \nonumber \\
& \mathop=^{(a)} 1- \frac{1}{\Gamma(L) }G_{0,2}^{2,0} \left[ { \frac{\gamma }{4\bar{\gamma}_B}\left|  {\begin{array}{*{20}c}
    {-}   \\
   { L,0}  \\
\end{array}} \right.}  \right] \nonumber \\
& = 1 - \bar{F}_B^C(\gamma),
\end{align}
\begin{align} \label{PDF_gammaB}
f_B^C(\gamma)& = \frac{1}{2^L\bar{\gamma}_B   \Gamma(L)}\left( \sqrt{\frac{\gamma}{\bar{\gamma}_B}} \right)^{L-1} \mathbf{K}_{L-1}\left(  \sqrt{\frac{\gamma}{\bar{\gamma}_B}} \right)\nonumber \\
& = \frac{1}{\Gamma(L) }G_{0,2}^{2,0} \left[ { \frac{\gamma }{4\bar{\gamma}_B}\left|  {\begin{array}{*{20}c}
    {-}   \\
   { L-1,0}  \\
\end{array}} \right.}  \right],
\end{align}
where step $(a)$ is developed using \cite[Eq. (9.34.3)]{gradshteyn2014table}, and $\bar{F}_{B}^C(\gamma)$ is the complementary CDF of $F_B^C(\gamma)$.
\begin{proposition} 
For imperfect coherent phase shifting design, the PDF and CDF of $\gamma_E$ are respectively given by (\ref{PDF_gammaE}) and (\ref{CDF_gammaE}), shown at the top of next page.
\begin{figure*}[!t]
\setcounter{MYtempeqncnt}{\value{equation}}
\setcounter{equation}{3}
\begin{align} \label{PDF_gammaE}
f_E^C(\gamma)=& \frac{1}{2\delta \bar{\gamma}_E \Gamma(L)(R_2^2-R_1^2)}\left[R_2^{2+\delta}G_{1,3}^{2,1} \left[ {\frac{R_2^\delta \gamma}{4\bar{\gamma}_E}   \left| {\begin{array}{*{20}c}
		{-\frac{2}{\delta}}   \\
		{L-1,0,-1-\frac{2}{\delta} }  \\
		\end{array}} \right.}  \right] - R_1^{2+\delta}G_{1,3}^{2,1} \left[ {\frac{R_1^\delta\gamma}{4\bar{\gamma}_E}   \left| {\begin{array}{*{20}c}
		{-\frac{2}{\delta}}   \\
		{L-1,0,-1-\frac{2}{\delta} }  \\
		\end{array}} \right.} \right]\right],
\end{align}
\begin{align} \label{CDF_gammaE}
F_E^C(\gamma)& =  \frac{2}{\delta \Gamma(L)(R_2^2-R_1^2)}\left[R_2^{2}G_{2,4}^{2,2} \left[ {\frac{R_2^\delta\gamma}{4\bar{\gamma}_E}   \left| {\begin{array}{*{20}c}
		{1,1-\frac{2}{\delta}}   \\
		{L,1,-\frac{2}{\delta},0 }  \\
		\end{array}} \right.} \hspace{-1.2ex} \right] - R_1^{2}G_{2,4}^{2,2} \left[ {\frac{R_1^\delta \gamma}{4\bar{\gamma}_E}   \left| {\begin{array}{*{20}c}
		{1,1-\frac{2}{\delta}}   \\
		{L,1,-\frac{2}{\delta},0  }  \\
		\end{array}} \right.} \right]\right],
\end{align}
\hrulefill
\end{figure*}
\end{proposition}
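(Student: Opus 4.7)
The plan is to condition on Eve's distance $d_{R,E}=r$ to the RIS, write the conditional SNR statistics by reusing the Bob-side expressions (\ref{CDF_gammaB})--(\ref{PDF_gammaB}), and then average over the distance distribution induced by the HPPP assumption on Eve's location. Because $\gamma_E\mid r=(\bar{\gamma}_E/r^\delta)\,|\epsilon_L|^2$ and $|\epsilon_L|^2$ has the same distribution as $\gamma_B/\bar{\gamma}_B$, the conditional CDF and PDF of $\gamma_E$ follow from (\ref{CDF_gammaB})--(\ref{PDF_gammaB}) by replacing $\bar{\gamma}_B$ with $\bar{\gamma}_E/r^\delta$, yielding Meijer $G$-functions with argument $\gamma r^\delta/(4\bar{\gamma}_E)$. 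For a single Eve placed uniformly in the annulus $[R_1,R_2]$, normalizing the HPPP intensity over the ring area gives the distance PDF $f_R(r)=2r/(R_2^2-R_1^2)$ for $R_1\le r\le R_2$.

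The marginal densities are then assembled by total probability, $f_E^C(\gamma)=\int_{R_1}^{R_2} f_{\gamma_E\mid r}(\gamma)\,f_R(r)\,dr$ and analogously for $F_E^C$. After the substitution $t=r^\delta$, which maps $r^{\delta+1}\,dr$ to $(1/\delta)\,t^{2/\delta}\,dt$, the core computational tool is the Meijer $G$ finite-integral identity $\int_0^u t^{s-1}G_{p,q}^{m,n}[\omega t\mid a_p;b_q]\,dt=u^s\,G_{p+1,q+1}^{m,n+1}[\omega u\mid 1-s,a_p;b_q,-s]$, used together with the decomposition $\int_{R_1}^{R_2}=\int_0^{R_2}-\int_0^{R_1}$. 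Applied with $s=1+2/\delta$ to the $G_{0,2}^{2,0}$ representation of $f_{\gamma_E\mid r}$, this identity delivers (\ref{PDF_gammaE}) in one step and also raises $u=R_j^\delta$ to $u^s=R_j^{2+\delta}$ to match the prefactors in the stated formula.

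For the CDF, in order to avoid carrying a ``$1-\bar F$'' constant through the $r$-integral, I would first rewrite $F_{\gamma_E\mid r}(\gamma)$ as a single Meijer $G$ by integrating $f_{\gamma_E\mid r}$ from $0$ to $\gamma$ via the same identity at $s=1$, producing a $\tfrac{\gamma r^\delta}{4\bar{\gamma}_E\Gamma(L)}G_{1,3}^{2,1}$ form. A second application of the identity with $s=1+2/\delta$, after the $t=r^\delta$ substitution, then yields a $G_{2,4}^{2,2}$ function. The final step applies the scaling identity $z^c G_{p,q}^{m,n}[z\mid a_p;b_q]=G_{p,q}^{m,n}[z\mid a_p+c;b_q+c]$ with $c=1$ to absorb the prefactor $\gamma/(4\bar{\gamma}_E)$ and to reduce $R_j^{2+\delta}$ to $R_j^2$; after permuting parameters within the $n$- and $m$-blocks, the list becomes $\{1,1-2/\delta;L,1,-2/\delta,0\}$ as in (\ref{CDF_gammaE}). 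The main obstacle I anticipate is precisely this parameter bookkeeping: tracking the entries $1-s$, $-s$, and the unit shift across two nested applications of the integration identity and verifying that the $(R_j^\delta)^{1+2/\delta}$ powers telescope against the $(R_j^\delta\gamma/(4\bar{\gamma}_E))^{-1}$ produced by the scaling identity so as to leave the clean $2/[\delta\,\Gamma(L)(R_2^2-R_1^2)]$ coefficient of (\ref{CDF_gammaE}).
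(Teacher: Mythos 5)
Your derivation is correct, and every step checks out: conditioning on the RIS--Eve distance $r$ with $f_R(r)=2r/(R_2^2-R_1^2)$ is consistent with the paper's own HPPP model (it is exactly the distribution that appears, after the change of variable $y=r^\delta$, as Eq.~(\ref{PDF_Y}) in the appendix), the substitution $t=r^\delta$ with $r^{\delta+1}dr=(1/\delta)t^{2/\delta}dt$ produces the prefactor $1/[2\delta\bar{\gamma}_E\Gamma(L)(R_2^2-R_1^2)]$, and the finite-integral identity at $s=1+2/\delta$ gives precisely the $G_{1,3}^{2,1}$ terms with parameters $\{-2/\delta;L-1,0,-1-2/\delta\}$ and weights $R_j^{2+\delta}$ of (\ref{PDF_gammaE}); your two-stage CDF computation, followed by the unit-shift scaling identity, reproduces the parameter list $\{1,1-2/\delta;L,1,-2/\delta,0\}$ of (\ref{CDF_gammaE}) up to permutations within blocks, which is immaterial. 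The route differs from the paper's only in organization, not in substance. The paper's proof is a two-line outsourcing: it cites \cite[Eq.~(3)]{Kong2022WCNC} for the marginal PDF (\ref{PDF_gammaE}) and then obtains (\ref{CDF_gammaE}) by integrating that marginal PDF over $\gamma$ (the same Meijer-$G$ identity at $s=1$ plus the scaling step). You instead derive the PDF from first principles and, for the CDF, integrate over $\gamma$ \emph{first} (conditionally on $r$) and average over the distance \emph{second} --- the same two applications of the same identity, in the opposite order, justified by Fubini. What your version buys is self-containedness: a reader does not need the external reference, and the conditional-CDF route makes transparent why the $R_j^{2+\delta}$ powers collapse to $R_j^2$ after the scaling step. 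What the paper's ordering buys is brevity, since the marginal PDF is already on record.
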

\begin{proof}
The PDF of $\gamma_E$ refers to \cite[Eq. (3)]{Kong2022WCNC}, and then integral in terms of $\gamma$, the proof for (\ref{CDF_gammaE}) is completed.
\end{proof}
\begin{figure}[!t]
\centering{\includegraphics[width=\columnwidth]{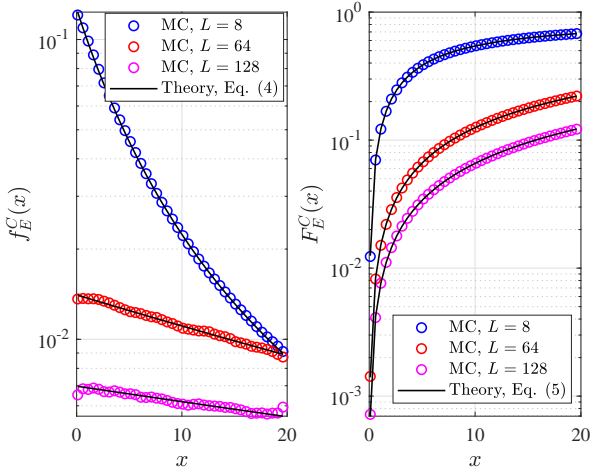}}
\caption{PDF \& CDF of $\gamma_E$, $\delta = 3.7, R_1 = 0.1, R_2 = 1$ and $\bar{\gamma}_E = 40$ dB.}
\label{PDF_CDF_gammaE_imperfect}
\end{figure}
\subsection{Random Discrete Phase Shifting}
Exploiting random discrete phase shifting can avoid the phase adjustment with perfect phase resolution and  reduce system overhead needed for global channel state information (CSI) acquisition for Alice. Hence, the random discrete phase shifting design is assumed. More specifically, the phase shifts are selected randomly from the following sets
\begin{equation}
\epsilon_{l,i} \in \left\{0,\frac{2\pi}{2^Q},\cdots,\frac{2\pi (2^Q -1)}{2^Q} \right\},
\end{equation}
where $Q$ is the resolution of the RIS element’s phase shifter \cite{8741198}.

Leveraging the results \cite[Lemma 2]{9079918}, when $L \rightarrow \infty$, $\epsilon_L$ can be approximated as a complex Gaussian random variable (RV) with zero mean and $L$ variance, i.e., $\epsilon_L \sim \mathcal{CN}(0,L)$. Using the fact that $|\epsilon_L|^2$ follows the exponential  distribution with parameter $L$. The PDF and CDF of $\gamma_B$ are given, respectively, by
\begin{equation} \label{PDF_gammaB_Discrete}
f_B^D(\gamma) = \frac{1  }{L\bar{\gamma}_B} \exp\left(-\frac{\gamma}{L\bar{\gamma}_B}\right), 
\end{equation}
\begin{equation} \label{CDF_gammaB_Discrete}
F_B^D(\gamma) = 1 - \exp\left(-\frac{\gamma}{L\bar{\gamma}_B}\right).
\end{equation}
\begin{proposition} \label{Proposition_GammaE_Discrete}
For random discrete phase shifting design, when $L$ tends to large values, the PDF and CDF of $\gamma_E$ can be tightly approximated by   
\begin{align} \label{PDF_gammaE_Discrete}
f_E^D(\gamma) & = \frac{2 }{\delta L \bar{\gamma}_E (R_2^2 - R_1^2)} \left( \frac{L\bar{\gamma}_E}{\gamma}\right)^{1+\frac{2}{\delta}} \nonumber \\
&\times \left[\gamma\left(1+\frac{2}{\delta},\frac{R_2^\delta}{L\bar{\gamma}_E}\gamma\right)-\gamma\left(1+\frac{2}{\delta},\frac{R_1^\delta}{L\bar{\gamma}_E}\gamma\right)\right],
\end{align}
\begin{align}  \label{CDF_gammaE_Discrete}
F_E^D(\gamma) =& 1- \frac{2}{\delta(R_2^2-R_1^2)}\left(\frac{L\bar{\gamma}_E}{\gamma}\right)^{\frac{2}{\delta}} \nonumber \\
& \hspace{5ex}\times \left[\gamma\left(\frac{2}{\delta},\frac{R_2^\delta \gamma}{L\bar{\gamma}_E} \right)-\gamma\left(\frac{2}{\delta},\frac{R_1^\delta \gamma}{L\bar{\gamma}_E} \right) \right].
\end{align}
\end{proposition}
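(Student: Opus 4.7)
The plan is to obtain both the PDF and CDF of $\gamma_E$ by conditioning on Eve's random distance from the RIS, $r \triangleq d_{R,E}$, and then averaging over the distribution of $r$ induced by the HPPP on the annulus $[R_1,R_2]$. A single point placed uniformly in the annulus has radial density $f_r(r)=2r/(R_2^2-R_1^2)$ for $r\in[R_1,R_2]$, so the problem reduces to a one-dimensional integral once the conditional law of $\gamma_E$ given $r$ is identified.

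The conditional law comes from the same large-$L$ approximation already invoked for Bob in (\ref{PDF_gammaB_Discrete})--(\ref{CDF_gammaB_Discrete}): by \cite[Lemma 2]{9079918}, $\epsilon_L\sim\mathcal{CN}(0,L)$, hence $|\epsilon_L|^2$ is exponential with mean $L$. Substituting into the SNR expression gives $\gamma_E=(\bar{\gamma}_E/r^\delta)|\epsilon_L|^2$, so that
\begin{equation*}
F_{E\mid r}^D(\gamma)=1-\exp\!\left(-\frac{r^\delta\gamma}{L\bar{\gamma}_E}\right),\quad f_{E\mid r}^D(\gamma)=\frac{r^\delta}{L\bar{\gamma}_E}\exp\!\left(-\frac{r^\delta\gamma}{L\bar{\gamma}_E}\right).
\end{equation*}

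Marginalizing over $r$ yields
\begin{equation*}
F_E^D(\gamma)=1-\frac{2}{R_2^2-R_1^2}\int_{R_1}^{R_2} r\,\exp\!\left(-\frac{r^\delta\gamma}{L\bar{\gamma}_E}\right)dr,
\end{equation*}
and an analogous expression for $f_E^D(\gamma)$ with integrand $r^{\delta+1}\exp(\cdot)$ multiplied by $2/(L\bar{\gamma}_E(R_2^2-R_1^2))$. A single substitution $t=r^\delta\gamma/(L\bar{\gamma}_E)$ converts both finite-range integrals into differences of lower incomplete gamma functions evaluated at the endpoints $R_1^\delta\gamma/(L\bar{\gamma}_E)$ and $R_2^\delta\gamma/(L\bar{\gamma}_E)$: the CDF integral produces $\gamma(2/\delta,\cdot)$ terms with prefactor $(L\bar{\gamma}_E/\gamma)^{2/\delta}$, while the PDF integral produces $\gamma(1+2/\delta,\cdot)$ terms with prefactor $(L\bar{\gamma}_E/\gamma)^{1+2/\delta}$. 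Matching coefficients against (\ref{CDF_gammaE_Discrete}) and (\ref{PDF_gammaE_Discrete}) then closes the argument.

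The main obstacle is purely bookkeeping: one must track the exponents of $(L\bar{\gamma}_E/\gamma)$ introduced by the change of variables and verify the signs so that both expressions degenerate, as $R_1\to 0$, to the expected single-disk formulas. No special-function identities beyond the definition $\gamma(a,x)=\int_0^x t^{a-1}e^{-t}dt$ are required, and an internal consistency check is provided by differentiating $F_E^D$ in $\gamma$ and recovering $f_E^D$ via the Leibniz rule on the two incomplete gamma terms.
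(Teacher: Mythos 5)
Your proposal is correct and takes essentially the same route as the paper: the paper likewise averages the exponential law of $|\epsilon_L|^2$ (obtained from the large-$L$ approximation $\epsilon_L \sim \mathcal{CN}(0,L)$) over the distance distribution, the only difference being that it integrates over $y = d_{RE}^\delta$ with density $f_Y(y)=\frac{2y^{2/\delta-1}}{\delta(R_2^2-R_1^2)}$ and invokes \cite[Eq.~(3.381.1)]{gradshteyn2014table}, whereas you integrate over $r$ itself and perform the substitution $t = r^\delta\gamma/(L\bar{\gamma}_E)$ by hand. The two derivations are the same computation up to this change of variables.
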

\begin{proof}
See Appendix. \ref{Proof_Discrete_SNR_E}.
\end{proof}
\begin{figure}[!t]
\centering{\includegraphics[width=\columnwidth]{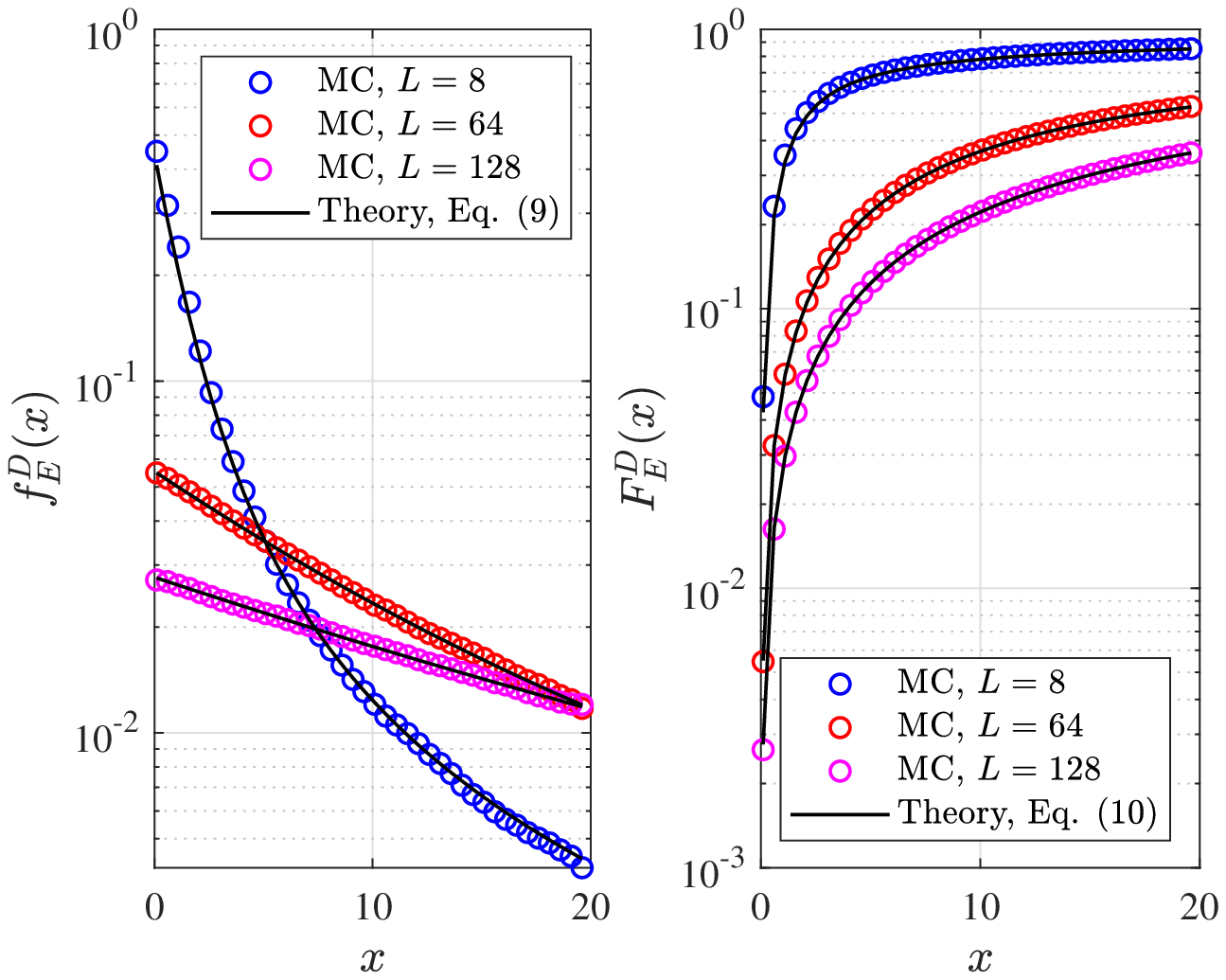}}
\caption{PDF \& CDF of $\gamma_E$, $\delta = 3.7, R_1 = 0.1, R_2 = 1$ and $\bar{\gamma}_E = -10$ dB.}
\label{PDF_CDF_gammaE_Discrete}
\end{figure}
Figs. \ref{PDF_CDF_gammaE_imperfect} and \ref{PDF_CDF_gammaE_Discrete} present the analytical PDFs and CDFs with the simulated ones, apparently the tightness of our analytical results are verified. 
\section{Secrecy Performance characterization}
According to \cite{8706707}, the instantaneous secrecy capacity of such a system configuration under the assumption that eavesdroppers do not collude is given by
\begin{equation}
C_{s} = [\log_2(1+\gamma_B) - \log_2(1 + \gamma_E)]^+.
\end{equation}

Based on $C_s$, the SOP, the lower bound of SOP, and the PNZ are correspondingly derived in this section.
\subsection{Secrecy characterization with imperfect coherent phase shifting design}
\begin{proposition}
Considering the imperfect coherent phase shifting design, the SOP ($\mathcal{P}_{out}^C$), lower bound of SOP ($\mathcal{P}_{out}^{C,L}$), and PNZ ($\mathcal{P}_{nz}^C$) of the RIS-assisted system under consideration are respectively given by (\ref{SOP_exact}), (\ref{LowSOP_exact}), and (\ref{PNZ_exact}), shown at the top of next page.
\begin{figure*}[!t]
\setcounter{MYtempeqncnt}{\value{equation}}
\setcounter{equation}{11}
\begin{align} \label{SOP_exact}
\mathcal{P}_{out}^C = 1 - \frac{2\bar{\gamma}_B}{R_s\delta \bar{\gamma}_E \Gamma(L)\Gamma(L)(R_2^2-R_1^2)}\left[R_2^{2+\delta} \sum \limits_{k=0}^\infty \frac{1}{k!}\left( \frac{-W}{4\bar{\gamma}_B }\right)^kG_{3,3}^{2,3} \left[ {\frac{R_2^\delta \bar{\gamma}_B}{\bar{\gamma}_E R_s}   \left| {\begin{array}{*{20}c}
		{0,-\frac{2}{\delta},k-L}   \\
		{L-1,0,-1-\frac{2}{\delta} }  \\
		\end{array}} \right.} \right] \right. \nonumber \\
 \left.- R_1^{2+\delta} \sum \limits_{k=0}^\infty \frac{1}{k!}\left( \frac{-W}{4\bar{\gamma}_B}\right)^kG_{3,3}^{2,3} \left[ {\frac{R_1^\delta\bar{\gamma}_B}{\bar{\gamma}_ER_s}   \left| {\begin{array}{*{20}c}
		{0,-\frac{2}{\delta},k-L}   \\
		{L-1,0,-1-\frac{2}{\delta} }  \\
		\end{array}} \right.} \right] \right] ,
\end{align}
\begin{align} \label{LowSOP_exact}
\mathcal{P}_{out}^{C,L} = 1 - \frac{2}{\delta \bar{\gamma}_E \Gamma(L)\Gamma(L)(R_2^2-R_1^2)}\left[R_2^{2}G_{3,3}^{2,3} \left[ {\frac{R_2^\delta \bar{\gamma}_B}{R_s\rho_E}   \left| {\begin{array}{*{20}c}
		{1-\frac{2}{\delta},1-L,1}   \\
		{L,1,-\frac{2}{\delta} }  \\
		\end{array}} \right.} \hspace{-1.2ex} \right] 
		 - R_1^{2}G_{3,3}^{2,3} \left[ {\frac{R_1^\delta\bar{\gamma}_B}{R_s\bar{\gamma}_E}   \left| {\begin{array}{*{20}c}
		{1-\frac{2}{\delta},1-L,1}   \\
		{L,1,-\frac{2}{\delta} }  \\
		\end{array}} \right.} \right]\right],
\end{align}
\begin{align} \label{PNZ_exact}
\mathcal{P}_{nz}^C=\frac{2}{\delta \Gamma(L)\Gamma(L)(R_2^2-R_1^2)}\left( R_2^2 G_{3,3}^{2,3} \left[ {\frac{R_2^\delta \bar{\gamma}_B}{\bar{\gamma}_E}   \left| {\begin{array}{*{20}c}
		{1,1-\frac{2}{\delta},1-L}   \\
		{L,1,-\frac{2}{\delta} }  \\
		\end{array}} \right.} \hspace{-1.2ex} \right]  -R_1^2G_{3,3}^{2,3} \left[ {\frac{R_1^\delta \bar{\gamma}_B}{\bar{\gamma}_E}   \left| {\begin{array}{*{20}c}
		{1,1-\frac{2}{\delta},1-L}   \\
		{L,1,-\frac{2}{\delta} }  \\
		\end{array}} \right.} \hspace{-1.2ex} \right]  \right).
\end{align}
\hrulefill
\end{figure*}
\end{proposition}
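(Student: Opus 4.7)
The plan is to derive each metric from its probabilistic definition:
\begin{align*}
\mathcal{P}_{out}^C &= \int_0^\infty F_B^C\bigl(W+R_s\gamma\bigr)\, f_E^C(\gamma)\,d\gamma,\\
\mathcal{P}_{out}^{C,L} &= \int_0^\infty F_B^C(R_s\gamma)\, f_E^C(\gamma)\,d\gamma,\\
\mathcal{P}_{nz}^C &= 1-\int_0^\infty F_B^C(\gamma)\, f_E^C(\gamma)\,d\gamma,
\end{align*}
where $R_s$ plays the role of the secrecy-rate multiplier and $W$ the associated additive shift that arises when $\Pr(C_s<R_s)$ is rewritten as a constraint on $\gamma_B$ given $\gamma_E$. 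The second integral is the standard high-SNR relaxation (obtained by dropping the constant offset), and the third expresses $\Pr(\gamma_B>\gamma_E)$.

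For $\mathcal{P}_{out}^{C,L}$ and $\mathcal{P}_{nz}^C$, I would substitute the Meijer-$G$ form of $F_B^C$ from step~$(a)$ of (\ref{CDF_gammaB}) and of $f_E^C$ from (\ref{PDF_gammaE}), reducing each $\gamma$-integral to a Mellin-Barnes convolution of two Meijer-$G$ kernels. This is closed in one step by the standard product-integral identity for Meijer-$G$s (e.g., \cite[Eq.~(7.811)]{gradshteyn2014table}). The $R_2$ and $R_1$ contributions of $f_E^C$ do not couple, so they propagate to the $R_2^2/R_1^2$ subtraction seen in (\ref{LowSOP_exact})--(\ref{PNZ_exact}); matching the contour-integrand exponents of the two factors produces the $G_{3,3}^{2,3}$ with the parameter rows $\{1,1-2/\delta,1-L\}$ and $\{L,1,-2/\delta\}$ shown there, and the constants $2/[\delta\Gamma(L)^2(R_2^2-R_1^2)]$ come directly from the $f_E^C$ prefactor and the $1/\Gamma(L)$ inside $F_B^C$.

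The main obstacle is the additive shift $W$ in $\mathcal{P}_{out}^C$, which breaks the direct convolution structure. I would handle it by Taylor-expanding $F_B^C(W+R_s\gamma)$ about $W=0$: each $W$-derivative of the Meijer kernel $G_{0,2}^{2,0}[(W+R_s\gamma)/(4\bar{\gamma}_B)\,|\,-\,;L,0]$ contributes a factor $1/(4\bar{\gamma}_B)$ and shifts one Meijer parameter by one unit, so after $k$ iterations the coefficient $(1/k!)\bigl(-W/(4\bar{\gamma}_B)\bigr)^k$ and the upper-row entry $k-L$ visible in (\ref{SOP_exact}) arise naturally (the sign coming from the complementary CDF). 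Each $k$-term is then a pure product of two Meijer-$G$s in $\gamma$, integrable by the same identity used above; the $\bar{\gamma}_B/R_s$ prefactor in (\ref{SOP_exact}) follows from the change of variable $v=R_s\gamma$ that aligns the two Mellin variables before the product formula is invoked. Collecting the $R_2$/$R_1$ pieces with these constants reproduces (\ref{SOP_exact}) and completes the derivation of all three closed forms.
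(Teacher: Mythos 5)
Your proposal is correct and takes essentially the same route as the paper: the same three integral definitions, the same Meijer-$G$ product-integral identity (Gradshteyn--Ryzhik Eq.~(7.811.1), equivalently Prudnikov Eq.~(2.24.1.3)) to close the lower-bound SOP and PNZ, and a power-series expansion in $W$ for the exact SOP in which each derivative shifts the parameter $L\to L-k$ and contributes a factor $-1/(4\bar{\gamma}_B)$, producing the $k$-sum and the $k-L$ entry of (\ref{SOP_exact}) exactly as the paper's derivation implicitly does after its change of variables. The only cosmetic deviation is that you compute the PNZ as $1-\int_0^\infty F_B^C(\gamma)f_E^C(\gamma)\,d\gamma$ whereas the paper uses the equivalent form $\int_0^\infty f_B^C(\gamma)F_E^C(\gamma)\,d\gamma$; both equal $\Pr(\gamma_B>\gamma_E)$ and lead to (\ref{PNZ_exact}).
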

\begin{proof}
See Appendix \ref{Proof_SecrecyMetric_uncertainty}.
\end{proof}
\subsection{Secrecy characterization with random discrete phase shifting design}
\begin{proposition}
Considering the random discrete phase shifting design, the SOP ($\mathcal{P}_{out}^D$), lower bound of SOP ($\mathcal{P}_{out}^{D,L}$), and PNZ ($\mathcal{P}_{nz}^D$) of the RIS-assisted system under consideration are respectively given by (\ref{SOP_exact_Discrete}), (\ref{LowSOP_exact_Discrete}), and (\ref{PNZ_exact_Discrete}), shown at the top of next page.
\begin{figure*}[!t]
\setcounter{MYtempeqncnt}{\value{equation}}
\setcounter{equation}{14}
\begin{equation} \label{SOP_exact_Discrete}
\mathcal{P}_{out}^D = 1 - \frac{2}{\delta(R_2^2 - R_1^2)}\left(\frac{R_s\bar{\gamma}_E}{\bar{\gamma}_B} \right)^\frac{2}{\delta} \sum \limits_{k=0}^\infty \frac{1}{k!} \left(-\frac{W}{L\bar{\gamma}_B} \right)^k \left[ G_{2,2}^{1,2} \left[ { \frac{R_2^\delta \bar{\gamma}_B }{R_s \bar{\gamma}_E}\left|  {\begin{array}{*{20}c}
    {1+\frac{2}{\delta},1}   \\
   {1+\frac{2}{\delta},0}  \\
\end{array}} \right.}  \right] - G_{2,2}^{1,2} \left[ { \frac{R_1^\delta \bar{\gamma}_B }{R_s \bar{\gamma}_E}\left|  {\begin{array}{*{20}c}
    {1+\frac{2}{\delta},1}   \\
   {1+\frac{2}{\delta},0}  \\
\end{array}} \right.}  \right] \right] ,
\end{equation}
\begin{align} \label{LowSOP_exact_Discrete}
\mathcal{P}_{out}^{D,L} = 1 - \frac{2}{\delta(R_2^2 - R_1^2)}\left(\frac{R_s\bar{\gamma}_E}{\bar{\gamma}_B} \right)^\frac{2}{\delta} \left[ G_{2,2}^{1,2} \left[ { \frac{R_2^\delta \bar{\gamma}_B }{R_s \bar{\gamma}_E}\left|  {\begin{array}{*{20}c}
    {1+\frac{2}{\delta},1}   \\
   {1+ \frac{2}{\delta},0}  \\
\end{array}} \right.}  \right]- G_{2,2}^{1,2} \left[ { \frac{R_1^\delta \bar{\gamma}_B }{R_s \bar{\gamma}_E}\left|  {\begin{array}{*{20}c}
    {1+\frac{2}{\delta},1}   \\
   {1+ \frac{2}{\delta},0}  \\
\end{array}} \right.}  \right]\right],
\end{align}
\begin{align} \label{PNZ_exact_Discrete}
\mathcal{P}_{nz}^{D} = 1 - \frac{2}{\delta(R_2^2-R_1^2)}\left(\frac{\bar{\gamma}_E}{ \bar{\gamma}_B} \right)^{\frac{2}{\delta}}\left[ G_{2,2}^{1,2} \left[ { \frac{R_2^\delta \bar{\gamma}_B }{\bar{\gamma}_E}\left|  {\begin{array}{*{20}c}
    {\frac{2}{\delta},1}   \\
   { \frac{2}{\delta},0}  \\
\end{array}} \right.}  \right] - G_{2,2}^{1,2} \left[ { \frac{R_1^\delta \bar{\gamma}_B }{\bar{\gamma}_E}\left|  {\begin{array}{*{20}c}
    {\frac{2}{\delta},1}   \\
   { \frac{2}{\delta},0}  \\
\end{array}} \right.}  \right]\right].
\end{align}
\hrulefill
\end{figure*}
\end{proposition}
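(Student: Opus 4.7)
The plan is to exploit the fact that $F_B^D$ is purely exponential, which collapses each of the three metrics into a Laplace-type integral of $f_E^D$. Writing $W=R_s-1$ and substituting (\ref{CDF_gammaB_Discrete}) into the standard expressions $\mathcal{P}_{out}^D=\int_0^\infty F_B^D(R_s(1+\gamma)-1)\,f_E^D(\gamma)\,d\gamma$, $\mathcal{P}_{out}^{D,L}=\int_0^\infty F_B^D(R_s\gamma)\,f_E^D(\gamma)\,d\gamma$, and $\mathcal{P}_{nz}^D=\int_0^\infty (1-F_B^D(\gamma))\,f_E^D(\gamma)\,d\gamma$ reduces all three to the single Laplace transform $\int_0^\infty e^{-s\gamma}f_E^D(\gamma)\,d\gamma$ evaluated at $s=R_s/(L\bar{\gamma}_B)$ (for the SOP and its lower bound) or $s=1/(L\bar{\gamma}_B)$ (for the PNZ), together with the scalar prefactor $\exp(-W/(L\bar{\gamma}_B))$ in the SOP case.

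Rather than attacking this transform via the incomplete-gamma form (\ref{PDF_gammaE_Discrete}), I would re-condition on the RIS--Eve distance $r$. The same complex-Gaussian approximation $\epsilon_L\sim\mathcal{CN}(0,L)$ that underpins Proposition \ref{Proposition_GammaE_Discrete} makes $\gamma_E\mid r$ exponentially distributed with mean $L\bar{\gamma}_E/r^\delta$, while the HPPP assumption gives $r$ the density $2r/(R_2^2-R_1^2)$ on $[R_1,R_2]$. Swapping the order of integration and evaluating the inner exponential-against-exponential integral in closed form yields
\[
\int_0^\infty e^{-s\gamma}\tfrac{r^\delta}{L\bar{\gamma}_E}e^{-r^\delta\gamma/(L\bar{\gamma}_E)}\,d\gamma = \frac{r^\delta}{r^\delta + c},\quad c:=sL\bar{\gamma}_E,
\]
so that each metric collapses to a finite radial integral $\frac{2}{R_2^2-R_1^2}\int_{R_1}^{R_2}\!\!r\cdot r^\delta/(r^\delta+c)\,dr$, up to the aforementioned $\exp(-W/(L\bar{\gamma}_B))$ prefactor in the SOP and a $1-(\cdot)$ rearrangement obtained from $r^\delta/(r^\delta+c)=1-c/(r^\delta+c)$ for the PNZ.

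The remaining step is to cast this radial integral as a difference of Meijer $G$-functions. Substituting $u=r^\delta/c$ converts it into a dimensionless integral $\int_0^{R_i^\delta/c} u^{a-1}/(1+u)\,du$, with $a=1+2/\delta$ for the SOP and its lower bound (where the $r\cdot r^\delta$ numerator is preserved) and $a=2/\delta$ for the PNZ (where the algebraic split first reduces the numerator to a bare $r$). Each such antiderivative is identified as $G_{2,2}^{1,2}[\cdot\mid a,1;a,0]$ via the Mellin--Barnes representation of $(1+u)^{-1}$ together with standard tables \cite{gradshteyn2014table}, reproducing the index difference visible between (\ref{SOP_exact_Discrete})--(\ref{LowSOP_exact_Discrete}) on the one hand and (\ref{PNZ_exact_Discrete}) on the other. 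For the SOP itself, Taylor-expanding $\exp(-W/(L\bar{\gamma}_B))=\sum_{k\ge 0}(-W/(L\bar{\gamma}_B))^k/k!$ reproduces the series form of (\ref{SOP_exact_Discrete}), mirroring the style used in (\ref{SOP_exact}) even though here the $G$-function is manifestly $k$-independent.

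The main obstacle is bookkeeping rather than conceptual: standard $G$-function integral tables are stated over $[0,\infty)$, so the finite radial range must be handled as an antiderivative evaluated at $R_1$ and $R_2$, and one must verify that the power shift between $r\cdot r^\delta$ (SOP and lower bound) and the bare $r$ (PNZ, after the algebraic rearrangement) translates correctly into the $1+2/\delta$ versus $2/\delta$ Meijer parameter.
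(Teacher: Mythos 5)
Your proposal is correct, and it takes a genuinely different route from the paper's. The paper never undoes the mixture over Eve's location: it substitutes the marginal incomplete-gamma PDF (\ref{PDF_gammaE_Discrete}) into each defining integral, recasts both the exponential and the lower incomplete gamma function as Meijer $G$-functions, and then evaluates products of $G$-functions by table look-up --- \cite[Eq. (2.24.1.3)]{prudnikov1990integrals} together with \cite[Eq. (9.31.1)]{gradshteyn2014table} for the exact SOP (keeping the shifted argument $y+W/(L\bar{\gamma}_B)$ inside one $G$-function), and \cite[Eq. (7.813.1)]{gradshteyn2014table} for the lower bound and the PNZ. You instead exploit the purely exponential $F_B^D$ to reduce all three metrics to a single Laplace transform of $f_E^D$, then re-condition on the RIS--Eve distance $r$ --- legitimate, since this is precisely how $f_E^D$ was constructed in Appendix \ref{Proof_Discrete_SNR_E} ($\gamma_E \mid r$ exponential with mean $L\bar{\gamma}_E/r^{\delta}$, and $f_r(r)=2r/(R_2^2-R_1^2)$) --- and swap the integration order, which is justified since all integrands are nonnegative. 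The only special-function step left is your identification
\begin{equation*}
\int_0^z \frac{u^{a-1}}{1+u}\,du \;=\; \frac{z^a}{a}\,{}_2F_1(1,a;a+1;-z) \;=\; G_{2,2}^{1,2}\!\left[ z \left| {\begin{array}{*{20}c} {a,1} \\ {a,0} \end{array}} \right. \right],
\end{equation*}
which is indeed valid (the ${}_2F_1$-to-$G$ conversion, the shift property $z^{\sigma}G_{p,q}^{m,n}[z\,|\,a_p;b_q]=G_{p,q}^{m,n}[z\,|\,a_p+\sigma;b_q+\sigma]$, and symmetry in the first $n$ upper parameters); with $c=R_s\bar{\gamma}_E/\bar{\gamma}_B$ and $a=1+2/\delta$ this yields (\ref{SOP_exact_Discrete})--(\ref{LowSOP_exact_Discrete}), while $c=\bar{\gamma}_E/\bar{\gamma}_B$ and $a=2/\delta$, after the split $r^{\delta}/(r^{\delta}+c)=1-c/(r^{\delta}+c)$, yields (\ref{PNZ_exact_Discrete}), with the prefactor $\tfrac{2}{\delta(R_2^2-R_1^2)}c^{2/\delta}$ coming out correctly in all three cases. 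As to what each approach buys: the paper's derivation is mechanical and reuses Proposition \ref{Proposition_GammaE_Discrete} verbatim, at the price of heavy product-of-$G$-functions machinery; yours is more elementary and more explanatory --- it shows that the three metrics differ only in the Laplace evaluation point and one algebraic split, makes it obvious why (\ref{LowSOP_exact_Discrete}) and (\ref{PNZ_exact_Discrete}) do not depend on $L$ (the $L$ cancels inside $c=sL\bar{\gamma}_E$), and exposes the infinite series in (\ref{SOP_exact_Discrete}) as nothing but the Taylor expansion of the constant $\exp(-W/(L\bar{\gamma}_B))$, i.e., $\mathcal{P}_{out}^{D}=1-\exp(-W/(L\bar{\gamma}_B))\,\bigl(1-\mathcal{P}_{out}^{D,L}\bigr)$ in compact closed form.
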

\begin{proof}
See Appendix \ref{Proof_SecrecyMetric_Discrete}.
\end{proof}
\section{Numerical results and discussion}
For the simplicity of notations, we assume $d_{AR} = 1, d_{RB} = 0.8$, $R_2 = 1$, $R_1 = 0.1$, and $\delta = 3.7$ for both phase shifting designs. Note that $\epsilon_{l,i}$ for the random discrete phase shifting design is randomly chosen from $0$ and $\pi$.

Considering the imperfect coherent phase shifting design and random discrete phase shifting design, Figs. \ref{Fig_SOP_CoherentPhase} and \ref{Fig_SOP_DiscretePhase} plot the analytical SOP with MC simulations, respectively. As shown in Fig. \ref{Fig_SOP_CoherentPhase}, the lower bound of SOP presents a gradually tight approximation to the exact SOP as $\bar{\gamma}_B$ or $L$ increases. We observe that the SOP gain due to the increase of the RIS elements number gradually decreases.

Unlike the coherent phase shifting design, one can observe that (i) the analytical SOP $\mathcal{P}_{out}^D$ becomes tight to the simulated ones as $L$ increases due to the condition that $f_E^D(\gamma)$ and $F_E^D(\gamma)$ are valid as $L$ tends to large values; (ii) as suggested by (\ref{LowSOP_exact_Discrete}), the lower bound of SOP, $\mathcal{P}_{out}^{D,L}$, presents a similar behavior as the PNZ, $\mathcal{P}_{nz}^{D}$, shown in Fig. \ref{Fig_pnzD_vs_gammaB}, i.e., both analytical secrecy metrics do not vary as $L$ increases; and (iii) $\mathcal{P}_{out}^{D,L}$ becomes tight to the analytical SOP $\mathcal{P}_{out}^{D}$ under the condition that 1) $\bar{\gamma}_B$ tends to $\infty$; 2) $R_t$ goes to $0$; and 3) $L$ goes to large values.

In Figs. \ref{Fig_pnzC_vs_gammaB} and \ref{Fig_pnzD_vs_gammaB}, we plot the PNZ for both cases. We observe that (i) our analytical results are verified by simulations; (ii) with the increase of the RIS elements number $L$, there is no obvious PNZ performance gain obtained. Different from the aforementioned two figures, we plot the PNZ against $L$ in Fig. \ref{Fig_pnz_vs_RISnumber}, it is interesting to observe that the PNZ remains constant for as $L$ increases, in other words, the random discrete phase shifting design behaves the same as the imperfect coherent phase shifting design in terms of PNZ. As a result, one can obtain an useful insight that the random discrete phase shifting design outperforms the imperfect phase shifting design with reduced system overhead and without presenting secrecy performance loss. 
\begin{figure}[!h]
\centering{\includegraphics[width=\columnwidth]{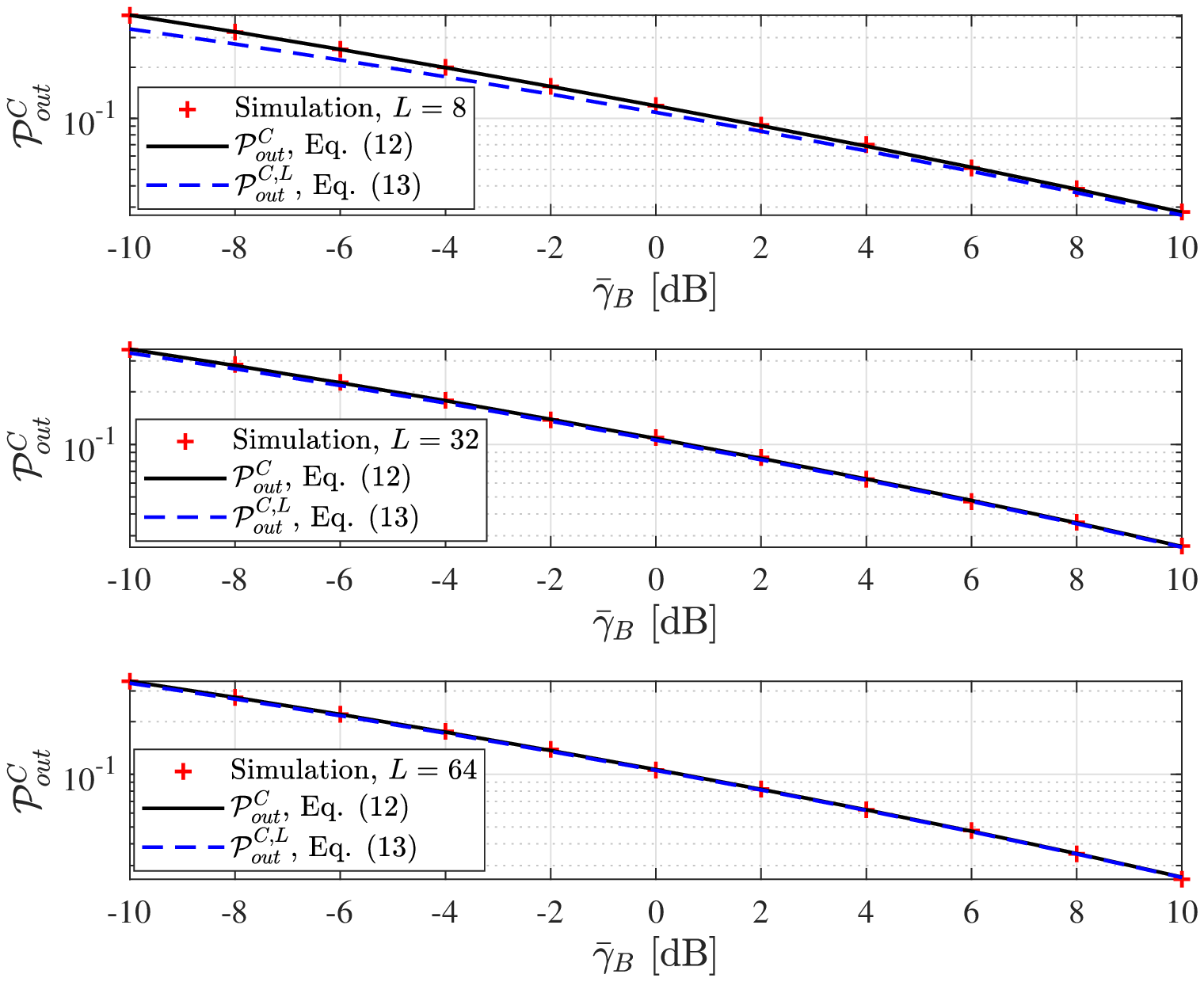}}
\caption{$\mathcal{P}_{out}$ versus $\bar{\gamma}_B$ for selected values of $L$ when $R_t = 0.8$ and $\bar{\gamma}_E= -20$ dB.}
\label{Fig_SOP_CoherentPhase}
\end{figure}
\begin{figure}[!t]
\centering{\includegraphics[width=\columnwidth]{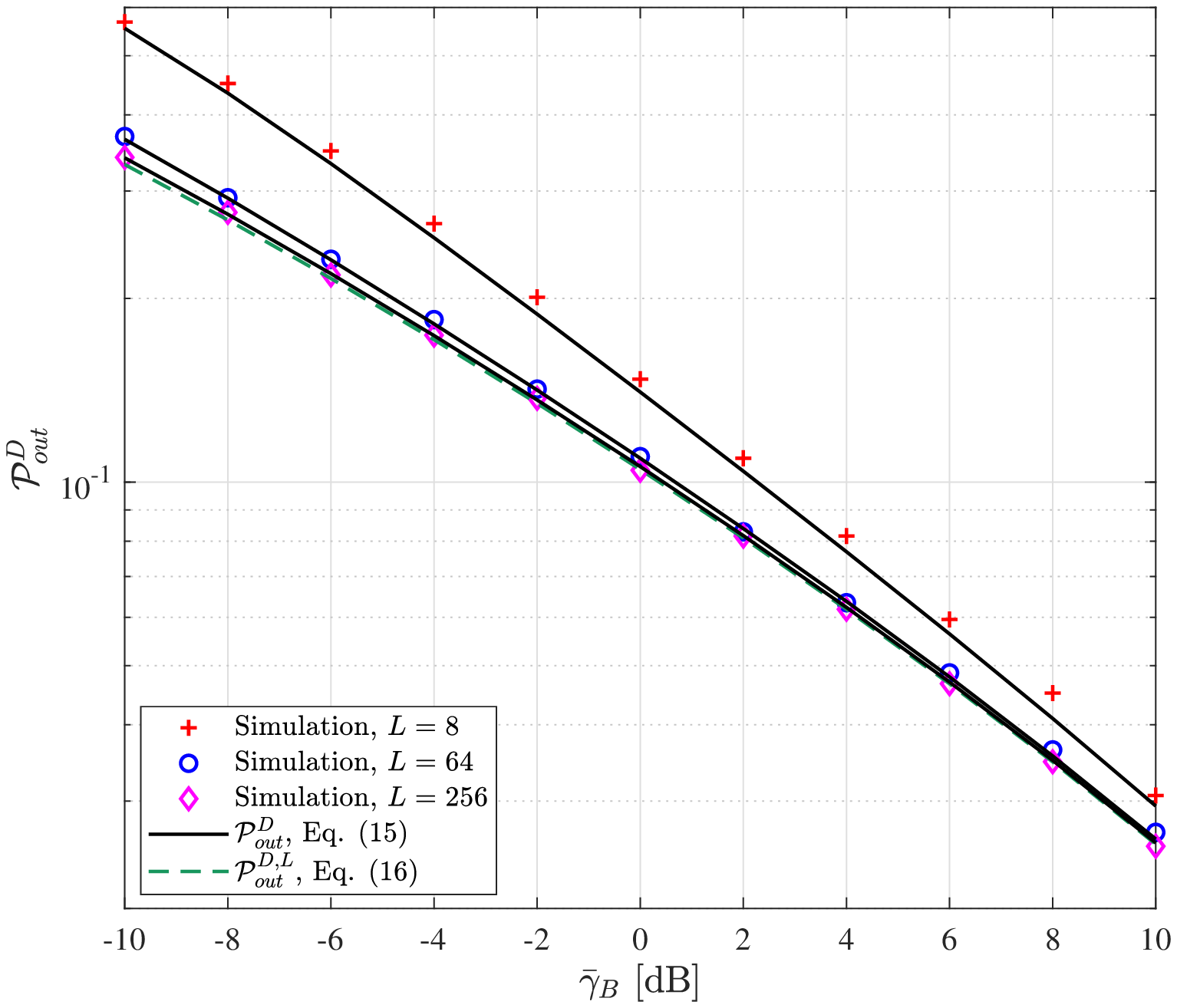}}
\caption{$\mathcal{P}_{out}$ versus $\bar{\gamma}_B$ for selected values of $L$ when $R_t = 0.8$ and $\bar{\gamma}_E= -20$ dB.}
\label{Fig_SOP_DiscretePhase}
\end{figure}
\begin{figure}[!t]
\centering{\includegraphics[width=\columnwidth]{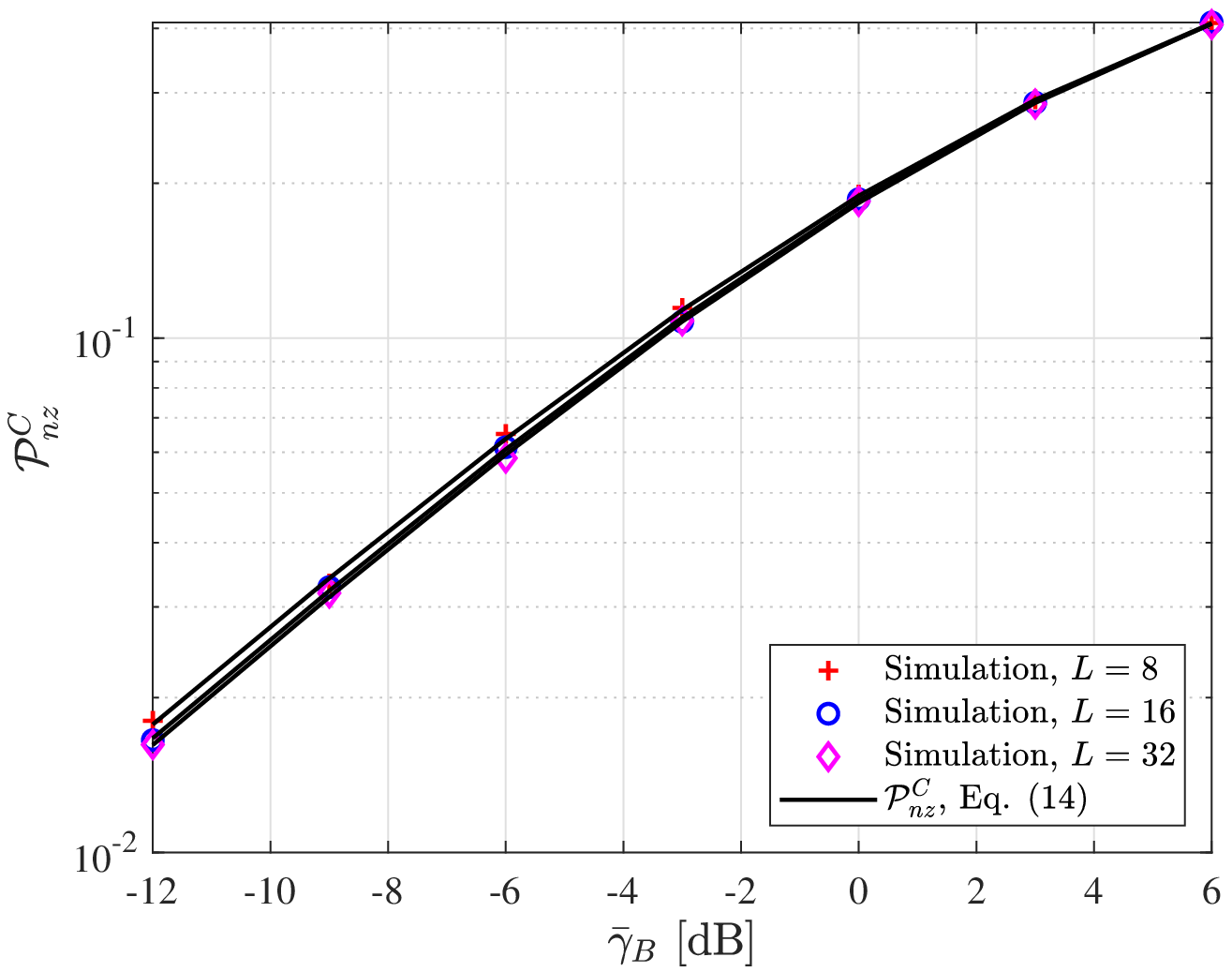}}
\caption{$\mathcal{P}_{nz}^C$ versus $\bar{\gamma}_B$ for selected values of $L$ when $\bar{\gamma}_E= 5$ dB.}
\label{Fig_pnzC_vs_gammaB}
\end{figure}
\begin{figure}[!t]
\centering{\includegraphics[width=\columnwidth]{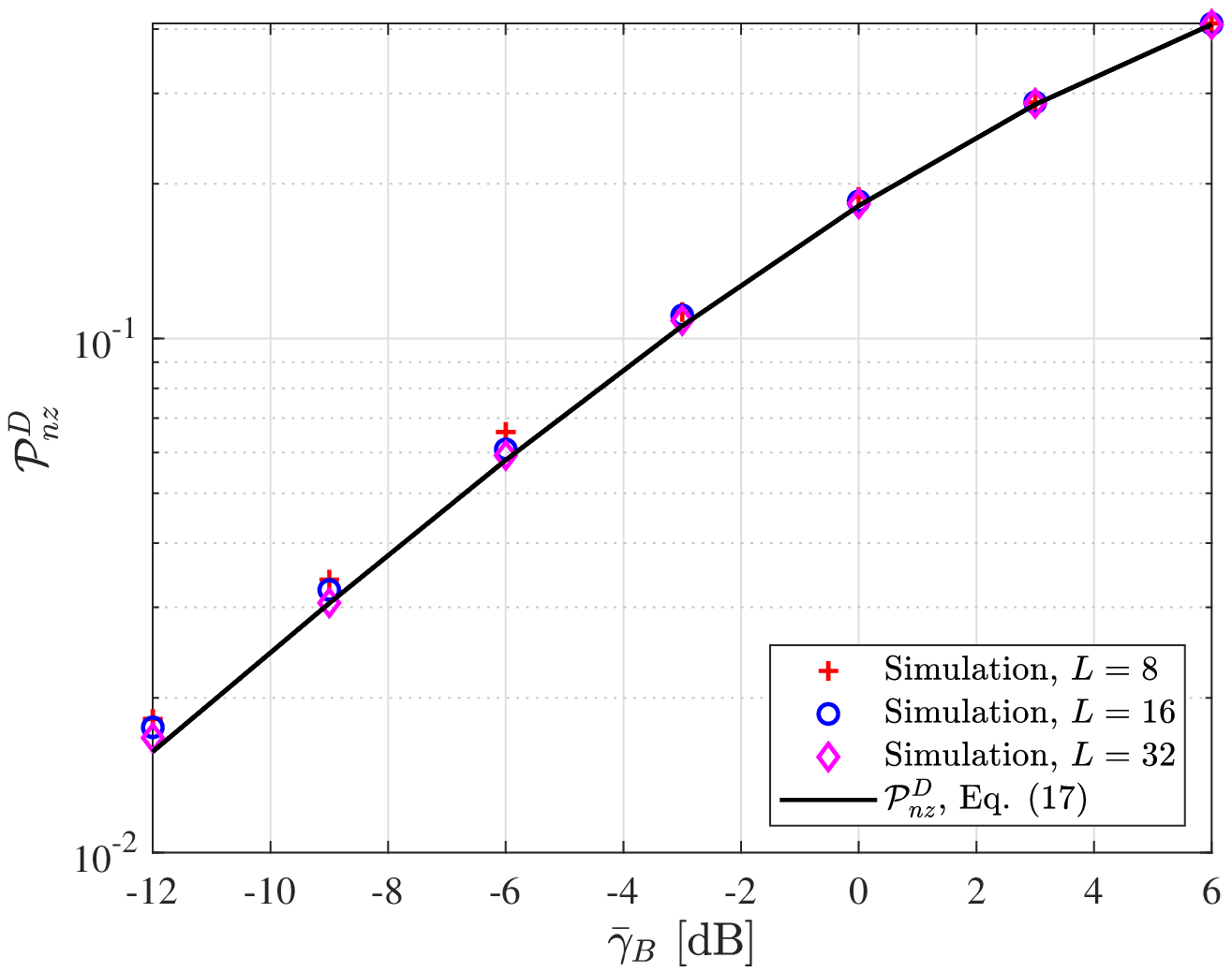}}
\caption{$\mathcal{P}_{nz}^D$ versus $\bar{\gamma}_B$ for selected values of $L$ when $\bar{\gamma}_E= 5$ dB.}
\label{Fig_pnzD_vs_gammaB}
\end{figure}
\begin{figure}[!t]
\centering{\includegraphics[width=\columnwidth]{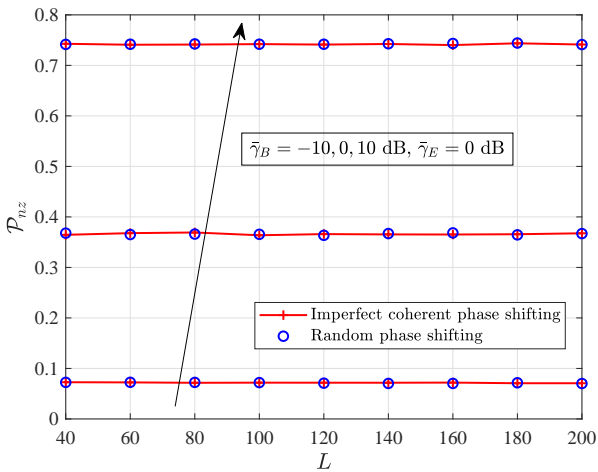}}
\caption{$\mathcal{P}_{out}$ versus $L$ for selected values of $\bar{\gamma}_B$.}
\label{Fig_pnz_vs_RISnumber}
\end{figure}
\section{Conclusion}
This paper investigated the RIS-assisted communication systems from the physical layer security perspective, where the impacts of phase shifting design and eavesdropper's location uncertainty are taken into  consideration. Secrecy metrics, including the SOP, the lower bound of SOP, and PNZ, are derived with closed-form expressions, and are further validated with the MC simulations. Numerical results demonstrate that the random discrete phase shifting design outperforms the imperfect phase shifting design with reduced complexity and same performance regarding the PNZ.  
\appendices
\section*{Acknowledgment}
This work was supported by the Luxembourg National Research Fund (FNR) under the CORE project RISOTTI C20/IS/14773976.
\section{Proof for Proposition  \ref{Proposition_GammaE_Discrete}} \label{Proof_Discrete_SNR_E}
Let $X=|\epsilon|^2$, $Y = d_{RE}^\delta$, and $\gamma_E = \bar{\gamma}_E\frac{X}{Y}$. The PDF of $Y$ is given by \cite[Eq. (28)]{9622149}
\begin{equation} \label{PDF_Y}
f_Y(y) = \frac{2y^{\frac{2}{\delta}-1  }}{\delta(R_2^2-R_1^2)}, \quad R_1^\delta \le y \le R_2^\delta.
\end{equation}
Since $X$ follows the exponential distribution with parameter $L$, and then substituting the PDF of $X$ and (\ref{PDF_Y}) into the following
\begin{equation}
f_E^D(\gamma) =  \frac{1}{\bar{\gamma}_E} \int_{R_1^\delta}^{R_2^\delta} y f_X\left( \frac{\gamma y}{\bar{\gamma}_E} \right)f_Y(y)dy,
\end{equation}
and applying \cite[Eq.(3.381.1)]{gradshteyn2014table}, the proof for $f_E^D(\gamma)$ is completed.

The CDF of $\gamma_E$ is derived 
\begin{equation}
F_E^D(\gamma) = \int_{R_1^\delta}^{R_2^\delta}F_X\left(\frac{y\gamma}{\bar{\gamma}_E} \right)f_Y(y)dy,
\end{equation}
using \cite[Eq.(3.381.1)]{gradshteyn2014table}, the proof for $F_E^D(\gamma)$ is obtained.
\section{Proof for SOP} \label{Proof_SecrecyMetric_uncertainty}
Substituting (\ref{CDF_gammaB}) and (\ref{PDF_gammaE}) into the mathematical expression of the SOP \cite[Eq. (8)]{8706707}
\begin{align} \label{SOP_Def}
\mathcal{P}_{out}^C & = \int_0^\infty F_B^C(R_s \gamma + W)f_E^C(\gamma) d\gamma,
\end{align}
next making the change of variables $\frac{R_s \gamma + W}{4\bar{\gamma}_B} = y$ and using \cite[Eq.(2.24.1.3)]{prudnikov1990integrals}, after some algebraic manipulations, the proof for the exact SOP is achieved.

Revisiting the definition of the lower bound of SOP \cite[Eq. (14)]{8706707}, and substituting (\ref{CDF_gammaB}) and (\ref{PDF_gammaE}) into it, we have 
\begin{align} \label{LowSOP_Def}
\mathcal{P}_{out}^{C,L} &= \int_0^\infty F_B^C(R_s \gamma)f_E^C(\gamma) d\gamma  \nonumber\\
& = 1- \frac{R_2^{2+\delta} I_1 -R_1^{2+\delta} I_2  }{2\delta \bar{\gamma}_E \Gamma(L)\Gamma(L)(R_2^2 - R_1^2)},
\end{align}
where 
\begin{equation}
I_1 = \int_0^\infty G_{0,2}^{2,0} \left[ { \frac{R_s \gamma}{4\bar{\gamma}_B}\left| \hspace{-1.3ex} {\begin{array}{*{20}c}
    {-}   \\
   { L,0}  \\
\end{array}} \right.} \hspace{-1.3ex} \right] G_{1,3}^{2,1} \left[ { \frac{R_2^\delta \gamma}{4\bar{\gamma}_E}\left| \hspace{-1.3ex} {\begin{array}{*{20}c}
    {-\frac{2}{\delta}}   \\
   { L-1,0,-1 - \frac{2}{\delta}}  \\
\end{array}} \right.} \hspace{-1.3ex} \right]d\gamma,
\end{equation}
\begin{equation}
I_2 = \int_0^\infty G_{0,2}^{2,0} \left[ { \frac{R_s \gamma}{4\bar{\gamma}_B}\left| \hspace{-1.3ex} {\begin{array}{*{20}c}
    {-}   \\
   { L,0}  \\
\end{array}} \right.} \hspace{-1.3ex} \right] G_{1,3}^{2,1} \left[ { \frac{R_1^\delta \gamma}{4\bar{\gamma}_E}\left| \hspace{-1.3ex} {\begin{array}{*{20}c}
    {-\frac{2}{\delta}}   \\
   { L-1,0,-1 - \frac{2}{\delta}}  \\
\end{array}} \right.} \hspace{-1.3ex} \right]d\gamma,
\end{equation}
then using \cite[Eq. (7.811.1)]{gradshteyn2014table}, the proof for $I_1$ and $I_2$ are obtained, afterwards, using some algebraic manipulations, the proof for the lower bound of SOP is completed.

Similarly, substituting (\ref{PDF_gammaB}) and (\ref{CDF_gammaE}) into the mathematical definition of the PNZ \cite[Eq. (14)]{8706707}
\begin{align}
\mathcal{P}_{nz}^C = \int_0^\infty f_B^C(\gamma )F_E^C(\gamma) d\gamma,
\end{align}
and then following the same methodology to obtain $\mathcal{P}_{out}^{C,L}$, the PNZ is achieved.
\section{Proof for SOP discrete phase shifting} \label{Proof_SecrecyMetric_Discrete}
Revisiting the mathematical expression of the SOP, and then substituting (\ref{CDF_gammaB_Discrete}) and (\ref{PDF_gammaE_Discrete}) into (\ref{SOP_Def}) yields
\begin{align}
\mathcal{P}_{out}^D & = 1 - \frac{2\left( L\bar{\gamma}_E\right)^{\frac{2}{\delta}}}{\delta(R_2^2 -R_1^2)} \int_0^\infty x^{-1-\frac{2}{\delta}}\exp\left(-\frac{R_sx +W}{L\bar{\gamma}_B} \right) \nonumber \\
& \hspace{2ex}\times \left[ \gamma\left(1+\frac{2}{\delta},\frac{LR_2^\delta}{\bar{\gamma}_E}x \right)-\gamma\left(1+\frac{2}{\delta},\frac{LR_1^\delta}{\bar{\gamma}_E}x \right) \right] dx \nonumber \\
& \mathop=^{(b)} 1 - \frac{2}{\delta(R_2^2 -R_1^2)} \left( \frac{R_s\bar{\gamma}_E}{\bar{\gamma}_B} \right)^{\frac{2}{\delta}}(\mathcal{I}_3 - \mathcal{I}_4),
\end{align}
where step $(b)$ is obtained by interchanging variables $\frac{R_sx}{L\bar{\gamma}_B} = y$, and then re-expressing the exponential function and the lower incomplete Gamma function in the manner of Meijer's $G$-function, i.e., $\exp(-x) = G_{0,1}^{1,0} \left[ { x\left|  {\begin{array}{*{20}c}
    {-}   \\
   { 0}  \\
\end{array}} \right.}  \right]$ and $\gamma(a,x) = G_{1,2}^{1,1} \left[ { x\left|  {\begin{array}{*{20}c}
    {1}   \\
   { a,0}  \\
\end{array}} \right.}  \right]$, we have \\  
\begin{align}
    \mathcal{I}_3 =&  \int_0^\infty y^{-1-\frac{2}{\delta}} G_{0,1}^{1,0} \left[ { y + \frac{W}{L\bar{\gamma}_B}\left|\hspace{-1.2ex}  {\begin{array}{*{20}c}
    {-}   \\
   { 0}  \\
\end{array}} \right.} \hspace{-1.2ex} \right] \nonumber \\
& \hspace{5ex}\times G_{1,2}^{1,1} \left[ { \frac{\bar{\gamma}_BR_2^\delta}{R_s\bar{\gamma}_E}y\left| \hspace{-1.2ex} {\begin{array}{*{20}c}
    {1}   \\
   { 1 + \frac{2}{\delta},0}  \\
\end{array}} \right.} \hspace{-1.2ex} \right]dy,
\end{align}
\begin{align}
\mathcal{I}_4 =& \int_0^\infty y^{-1-\frac{2}{\delta}} G_{0,1}^{1,0} \left[ { y + \frac{W}{L\bar{\gamma}_B}\left| \hspace{-1.2ex} {\begin{array}{*{20}c}
    {-}   \\
   { 0}  \\
\end{array}} \right.} \hspace{-1.2ex} \right] \nonumber \\ 
& \hspace{5ex} \times G_{1,2}^{1,1} \left[ { \frac{\bar{\gamma}_BR_1^\delta}{R_s\bar{\gamma}_E}y\left|\hspace{-1.2ex}  {\begin{array}{*{20}c}
    {1}   \\
   { 1 + \frac{2}{\delta},0}  \\
\end{array}} \right.} \hspace{-1.2ex} \right]dy.
\end{align}
$\mathcal{I}_3$ is developed using \cite[Eq.(2.24.1.3)]{prudnikov1990integrals} and the property of Meijer's $G$-function \cite[Eq. (9.31.1)]{gradshteyn2014table}. $\mathcal{I}_4$ can be similarly obtained. Finally, making some manipulations, the proof for $\mathcal{P}_{out}^D$ is completed.

Referring to the definition of the lower bound of SOP, we have
\begin{align}
\mathcal{P}_{out}^{D,L} & = \int_0^\infty F_B^D(R_s \gamma)f_E^D(\gamma) d\gamma  \nonumber\\
&=1-\frac{2L}{\bar{\gamma}_E \delta (R_2^2 -R_1^2)} \int_0^\infty \left(\frac{L\bar{\gamma}_E}{\gamma} \right)^{1+\frac{2}{\delta}}\exp\left(-\frac{R_s\gamma}{L\bar{\gamma}_B} \right) \nonumber\\
& \hspace{2ex} \times \left[ \gamma \left(1+\frac{2}{\delta},\frac{LR_2^\delta}{\bar{\gamma}_E}\gamma \right) - \gamma \left(1+\frac{2}{\delta},\frac{LR_1^\delta}{\bar{\gamma}_E}\gamma \right) \right] d\gamma, 
\end{align}
next re-expressing the exponential function and the lower incomplete Gamma function in the manner of Meijer's $G$-function, the proof for $\mathcal{P}_{out}^{D,L}$ is obtained with the assistance of \cite[Eq. (7.813.1)]{gradshteyn2014table}.

Following the steps to derive $\mathcal{P}_{out}^{D,L}$, the proof for $\mathcal{P}_{nz}^D$ can be similarly developed. 

\ifCLASSOPTIONcaptionsoff
  \newpage
\fi

\balance

\vspace{-0.2cm}
\bibliographystyle{IEEEtran}
\bibliography{EUSIPCO2021}
\end{document}